\theoremstyle{plain}
\newtheorem{thm}{\normalfont \textit{Theorem}}[section]
\xpatchcmd{\@thm}{\thm@headpunct{.}}{\thm@headpunct{}}{}{}
\renewenvironment{proof}[1][\proofname]{\par
\pushQED{\qed}%
\normalfont 
\trivlist
\item\relax
{\itshape
#1\@addpunct{:}}\hspace\labelsep\ignorespaces
}{%
\popQED\endtrivlist\@endpefalse
}
\newcommand{\symbolB}{
}
\begin{document}

\tikzstyle{block} = [draw, fill=blue!20, rectangle,
    minimum height=2.5em, minimum width=2.5em]
\tikzstyle{sum} = [draw, fill=blue!20, circle, node distance=1cm]
\tikzstyle{input} = [coordinate]
\tikzstyle{output} = [coordinate]
\tikzstyle{pinstyle} = [pin edge={to-,thin,black}]

%
\title{Precoded Chebyshev-NLMS based pre-distorter for nonlinear LED compensation in NOMA-VLC}
%
%
%

\author{Rangeet~Mitra,
        Vimal~Bhatia,~\IEEEmembership{Senior Member,~IEEE}
\thanks{R. Mitra and V. Bhatia are with Indian Institute of Technology Indore, Indore-453552, India, Email:phd1301202010@iiti.ac.in, vbhatia@iiti.ac.in. This work was submitted to IEEE Transactions on Communications on $26^{\text{th}}$ October, 2016, decisioned on $3^{\text{rd}}$ March, 2017, and revised on $25^{\text{th}}$ April, 2017, and is currently under review in IEEE Transactions on Communications.}}
\maketitle

\begin{abstract}
Visible light communication (VLC) is one of the main technologies driving the future 5G communication systems due to its ability to support high data rates with low power consumption, thereby facilitating high speed green communications. To further increase the capacity of VLC systems, a technique called non-orthogonal multiple access (NOMA) has been suggested to cater to increasing demand for bandwidth, whereby users' signals are superimposed prior to transmission and detected at each user equipment using successive interference cancellation (SIC).
Some recent results on NOMA exist which greatly enhance the achievable capacity as compared to orthogonal multiple access techniques. However, one of the performance-limiting factors affecting VLC systems is the nonlinear characteristics of a light emitting diode (LED). This paper considers the nonlinear LED characteristics in the design of pre-distorter for cognitive radio inspired NOMA in VLC, and proposes singular value decomposition based Chebyshev precoding to improve performance of nonlinear multiple-input multiple output NOMA-VLC. 
A novel and generalized power allocation strategy is also derived in this work, which is valid even in scenarios when users experience similar channels. Additionally, in this work, analytical upper bounds for the bit error rate  of the proposed detector are derived for square $M$-quadrature amplitude modulation.
\end{abstract}

\begin{IEEEkeywords}
Chebyshev Polynomials, LED Nonlinear Pre-distortion, NOMA, Precoding, VLC System, MIMO.
\end{IEEEkeywords}

%
\IEEEpeerreviewmaketitle

\section{Introduction}\label{Sec1}
Visible light communications (VLC) \cite{inan2009impact} is a viable supplement to existing radio frequency (RF) communications, since shifting communication towards the nm-wave range \cite{haas_what_is} results in expansion of the available bandwidth to meet the spectrum and low power demands for the proposed 5G systems. In VLC, the light emitting diode (LED)-lamps, generally used as luminaries, additionally serve as optical-transmitters, and photodiode-arrays are used as optical receivers in a typical optical attocell \cite{yin2016performance}. The intensity modulation of LED by users' signals is done at a speed imperceptible to human eye \cite{inan2009impact}, thereby achieving the dual goal of illumination and signal transmission at lower power as compared to RF equipment. VLC is a core component of Li-Fi systems \cite{haas_what_is}, which has been proposed for applications like Internet of Things (IoT), 5G systems, underwater communications, vehicle-to-vehicle communication and many others.

In parallel, to meet the ever increasing data demand of users in VLC for 5G systems (by 2020), a novel multiple access scheme called non-orthogonal multiple access (NOMA) (or more specifically power division multiple access) has attracted  attention \cite{saito2013non,dai2015non}. In this multiple access scheme, instead of allocating orthogonal time-frequency (TF) resources, the signals of multiple users are overlapped over the same TF resource. The users are allocated distinct power levels depending on their corresponding channel condition. Users with better channel conditions are allocated less power, while the users with poorer channel conditions are allocated more power. Consequently, the symbol detection at each user equipment (UE) is performed by using a successive interference canceller (SIC). Recently, NOMA is found to be viable for adoption into VLC \cite{hanaa} mainly due to the following reasons:
a) NOMA can handle small number of users which typically happens in a Li-Fi attocell (a small femtocell) \cite{haas_what_is}, b) channel is generally dominated by a line of sight (LOS) path which facilitates for accurate channel estimation, and c) adjusting tuning angles and field of view (FOV) gives additional degrees of freedom for multiplexing multi-user signals which can induce differential channel gains facilitating for power diversity.

Amongst existing techniques for NOMA precoding, the work in \cite{ding2016application} proposes different precoding matrices which are based on the assumption of different channel conditions at each UE. In \cite{ding2016application}, the authors propose a maximal ratio combining (MRC) based post-processing, which would work well if the channel matrices are well conditioned/ the columns of the channel matrices of all users (with all users having different quality of service (QoS) requirements) are all independent/dissimilar to ensure power-diversity. However, if the channel matrices are aligned/similar, the MRC based processing proposed in \cite{ding2016application} would render similar channel conditions at each UE thereby culminating in failure of the NOMA system. To avoid similar channels, pre-processing at the transmitter is considered in  \cite{hanif2016minorization} (is related to the problem addressed in this paper), however it assumes a single antenna for each UE in its formulation. In multiple-input multiple output (MIMO)-VLC, however, we have an array of photodiodes at each UE thus restricting the validity of the approach in \cite{hanif2016minorization} (which essentially decomposes the MIMO detection problem into several spaced multiple input single output (MISO) problems thereby losing diversity gain \cite{hanif2016minorization}). The work in \cite{hanif2016minorization} could be considered as ``one-antenna per UE"-analogue of the system considered in this paper. 

In IoT applications, the existence of different channel conditions for all UEs is not guaranteed. In such applications, channel diversity can be achieved by using precoding. The precoding technique suggested in \cite{ding} holds even when channel conditions for UEs are similar, however is applicable to only two-user scenario, and leaves its possible extension to arbitrary number of users as an open problem.
The works in \cite{ding2016application}, and \cite{ding} proposes a new paradigm for NOMA power allocation called ``cognitive-radio (CR) inspired power allocation" (also called CR-NOMA). In this paradigm, the user with a better channel condition is analogous to a secondary-user, and the user with worse channel condition is analogous to a primary-user. When a new user wants to access the link, it is served opportunistically under the condition that the existing users' QoS requirements are maintained. This policy applies in many 5G scenarios like IoT where users have diverse QoS requirements and not necessarily diverse channel conditions \cite{ding}. To accommodate multiplicity of users in such scenarios, it is crucial to have diverse channel conditions which can be either in-built as in \cite{ding2016application}, or achieved by precoding \cite{ding}. In this work, techniques for CR inspired NOMA downlink (which arises typically in 5G scenarios like IoT \cite{ding}) are proposed wherein channels of all users exhibit significant correlation in addition to device-impairments.

Despite employing NOMA-VLC as detailed above, the gains expected from NOMA-VLC systems is limited by inherent LED nonlinearity \cite{lee1977nonlinearity}. This LED nonlinearity can be mitigated by using pre-distortion or post-distortion techniques. Among pre-distortion techniques, the simplest technique would be to maintain a lookup table of the estimated nonlinearity. However, such static lookup tables lose their practicability due to
varying LED  characteristics \cite{kim2014adaptive} (due to device aging); hence there is a need for adaptive pre-distorters. Such adaptive pre-distorters have been suggested in the literature which are learnt using normalized least mean squares (NLMS) algorithm \cite{kim2014adaptive}, and Chebyshev regression using NLMS approach \cite{mitra_chebyshev}.

Apart from the requirement of channel diversity at each UE, and the need for mitigation of device impairments for each user, it is also essential to incorporate the QoS requirements of each user \cite{ding} (specified as design parameters in terms of rate) into the design of the precoder, power-allocation strategy, and the BER analysis of the overall system. In this regard,  some recent works in NOMA have the following limitations: a) the work in \cite{yang2017fair} considers a single input single output (SISO) channel and does not consider the individual users' QoS requirements, b) the work in \cite{zhang2016user} considers satisfying an individual users' QoS requirements in a SISO channel. Since the Lebesgue measure of the users's QoS constraint sets becomes small or even zero in a correlated channel scenario, hence \cite{zhang2016user} cannot be generalized for correlated MIMO channels, and c) although \cite{yin2016performance}, and \cite{hanaa} consider and analyze NOMA systems by considering user mobility, the case when the users experience similar and static channels (along with the problems being framed in the SISO setting as opposed to MIMO-NOMA) is not considered in both \cite{yin2016performance} and \cite{hanaa}. 

Additionally, to the best of authors' knowledge, all studies so far in a VLC-NOMA system \cite{hanaa,NOMA1,NOMA2} do not consider LED nonlinearity for the design of a pre-distorter. In this paper, a modified Chebyshev-NLMS based pre-distortion is proposed with hybrid eigen-decomposition based precoding in a MIMO NOMA-VLC scenario for IoT applications. In view of the existing NOMA literature in VLC, the major contributions of this work are summarized as follows:
\begin{itemize}
 \item We suggest a new hybrid precoding technique for downlink NOMA-VLC channels using a closed loop adaptive Chebyshev pre-distorter, based on singular value decomposition (SVD), in IoT applications. The proposed approach works even when the left and the right eigenvectors of all the available channel matrices for all users are correlated. In other words, if the channels have correlated channel state information (CSI), one cannot design different precoding vectors for users by the precoding framework proposed in \cite{ding2016application}, and \cite{marshoud2015mu}. For two-user NOMA, a QR decomposition based technique, as proposed in \cite{ding}, is able to deliver varying grades of QoS even in scenarios with similar channel conditions. However, the precoding technique presented in this paper holds for arbitrary number of users, which is an open problem raised in \cite{ding}.
\item For the proposed precoding technique for downlink MIMO NOMA-VLC, proof of sum-rate-gain maximization upon addition of a new user for the CR-NOMA is provided in this paper. Further, the theoretical analysis in this paper considers residual estimation error, both in the adaptive pre-distorter and in the CSI-estimation.
\item A novel power allocation technique is found for the proposed precoding scheme based on QoS requirements of individual users while considering the residual nonlinearity after post-distortion, and channel estimation error. While finding the optimal power allocation coefficients, we consider the SVD based precoding technique given in this paper in a scenario when the channel matrices of all the users have similar CSI with each user being opportunistically served whilst maintaining the QoS requirements of the existing users. In such conditions as given in \cite{ding}, the users have diverse QoS requirements as opposed to diverse channel conditions. Existing work on power allocation like gain ratio power allocation (GRPA) as given in \cite{hanaa} could provide a solution, however it does not take into account an individual users' QoS, and hence fails when the users have similar channels as highlighted in this work. 
Simulations indicate that the proposed power allocation technique performs well for square $M$-quadrature amplitude modulation (QAM) modulation schemes in the above mentioned scenario for varying number of users.
\item Analytical upper bounds for bit error rate (BER) vs signal to noise ratio (SNR) for varying number of users are derived for the proposed MIMO NOMA-VLC scenario by considering estimation error, and validated by simulations for square $M$-QAM. The simulations indicate that the theoretically derived BER formulae indeed match the simulated BER curves for varying number of users, which further validates the analysis presented in this work.
\end{itemize}
In this paper, we adopt the following terminology: scalars at time $k$ are denoted by subscript $k$ such as $x_{k}$, vectors (which are tuples of scalars), are denoted by small boldface as $\textbf{x}_{k}$, and matrices are denoted by capital boldface such as $\textbf{H}$. Transpose of matrices/vectors are denoted by $(\cdot)^{T}$. Additionally, inverse of transpose of a matrix is denoted by $(\cdot)^{-T}$ and the pseudo-inverse is denoted by $(\cdot)^{\dagger}$. The statistical expectation operator is denoted by $\mathbb{E}[\cdot]$. Sets are denoted by $\{\cdot\}$ in this work, $\|\cdot\|_{q}$ denotes the $l_{q}$ norm, and $\hat{(\cdot)}$ denotes estimate of the random variable.

This paper is organized as follows: Section-\ref{Sec2} discusses the MIMO-NOMA system model, Section-\ref{Sec3} reviews NLMS and Chebyshev-NLMS based pre-distortion, and existing two-user QR precoded linear NOMA is reviewed in Section-\ref{Sec4}. Section-\ref{Sec5}, presents proof of the feasibility of the proposed precoded Chebyshev-pre-distortion for nonlinear LED affected VLC-NOMA, and Section-\ref{Sec6} suggests the choice of suitable precoding matrix. Section-\ref{Sec7} presents the power allocation strategy for the given system. Expression for BER performance for square $M$-QAM is derived in Section-\ref{Sec8}. The simulation setup, parameters and results are described in Section-\ref{Sec9}. Finally, conclusions are drawn in Section-\ref{Sec10}.
\section{mimo-noma system model}\label{Sec2}
In this section, we present the MIMO-NOMA system model considered in this paper and introduce the terminology followed
throughout the paper. The proposed system model is given in Fig. \ref{Fig_2}. We denote the input vector at $k^{th}$ time instant as, $\textbf{x}_{k}=[x_{k}]_{k=zM_{T}+1}^{(z+1)M_{T}}$, where $z$ is an arbitrary integer that denotes the sample duration, and $M_{T}$ is the number of LEDs at the  transmitter (Tx). The input vector is a non-orthogonal superposition (superposition coding) of many users' signals with suitable power allocation. Mathematically, this can be written as:
\begin{equation}
\textbf{x}_{k} = \sum_{u=1}^{U}\sqrt{P^{(u)}}\textbf{s}^{(u)}_{k}
\end{equation}
where $U$ represents total number of users and $u$ denotes the $u^{th}$ user index variable. $P^{(u)}$ is the power allocated to the $u^{th}$ user, and $\textbf{s}^{(u)}_{k}$ is the tuple of $u^{th}$ user's symbol at $k^{th}$ time instant transmitted from all LEDs. Without loss of generality, to impose a constant power constraint, it is assumed that $\sum_{u=1}^{U}P^{(u)}=1$.
The symbol $\textbf{x}_{k}^{'}$ indicates precoding of $\textbf{x}_{k}$ by an estimated precoding matrix $\hat{\textbf{P}}^{(u)}$ for the $u^{th}$ user. 
To guarantee non-negativity of the precoded vector, the biasing factor for $\textbf{x}_{k}^{'}$ can be given as follows \cite{ma2015coordinated}:
\begin{align}
\textbf{x}_{k}^{'} = &\hat{\textbf{P}}^{(u)}\textbf{x}_{k} + \max_{1<q<M_{R}}\|\tilde{\textbf{p}}_{q}^{(u)^T}\|_{1}(1+\sqrt{-1}) 
\end{align}
The $\sqrt{-1}$ operator represents modulation by an orthogonal pulse in baseband, and $\tilde{\textbf{p}}_{q}^{(u)^T}$ represents the transpose of the $q^{th}$ row of $\hat{\textbf{P}}^{(u)}$.

This is followed by pre-distorter mapping $T(.)$. The considered channel matrix for the $u^{th}$ user is denoted by $\textbf{H}^{(u)}\in\mathbb{R}^{M_{R}\times M_{T}}$, and consequently the independent and identically distributed (i.i.d) additive white Gaussian noise (AWGN) is added, and the superposition of signals is broadcasted from the transmitter LED array.
For the $u^{th}$ user, the received signal vector $\textbf{y}_{k}^{(u)}$ can be written as:
\begin{equation}
\textbf{y}_{k}^{(u)} =  \textbf{H}^{(u)}A(T(\hat{\textbf{P}}^{(u)}\textbf{x}_{k})) + \textbf{n}_{k}
\end{equation}
where $\textbf{n}_{k}$ denotes AWGN with zero mean and covariance matrix $\sigma_{n}^{2}\textbf{I}$ ($\textbf{I}$ being the identity matrix). $T(.)=\sum_{\forall i} r_{k}^{(i)}T_{i}(.)$ denotes the adaptive pre-distorter transformation which can be learnt by popular techniques like NLMS \cite{kim2014adaptive}, or Chebyshev regression based NLMS \cite{mitra_chebyshev}. $T_{i}$ denotes the $i^{th}$ Chebyshev polynomial and $r_{k}^{(i)}$ denotes the pre-distorter weights (also called/implemented as random access memory (RAM) in the literature \cite{kim2014adaptive}). Chebyshev polynomials are min-max error optimal over the unit interval, and are therefore chosen as a set of basis functions for learning the pre-distorter as follows \cite{abramowitz1964handbook}: 
\begin{gather}
T_{0}(x)=1,\\ \nonumber
T_{1}(x)=x,\\ \nonumber
T_{i+1}(x) = 2xT_{i}(x)-T_{i-1}(x)
\end{gather}
At each UE, $\{{s}_{k}^{(u)}\}$ is then recovered by SIC given in \cite{ding2016application} and \cite{ding}, after multiplication by $(\hat{\textbf{H}}^{(u)}\hat{\textbf{P}}^{(u)})^{\dagger}$. The LED nonlinearity $A(.)$ is modeled by the following equation:
\begin{equation}
A(x) = \frac{x}{(1+(\frac{x}{I_{max}})^{2p})^{\frac{1}{2p}}}
\end{equation}
where the parameter $p$ controls severity of the nonlinearity and $I_{max}$ is the maximum saturation current of the LED \cite{elgala2010led}. Typically $p=0.5$ is chosen as in \cite{elgala2010led} to model a severe nonlinearity and the same is chosen in this work.

In the case of imperfect CSI (both transmit and receive) considered in this paper, we have the following measure of deviation expressed in terms of actual elements of channel matrix $\textbf{H}$, and its corresponding estimate, $\hat{\textbf{{H}}}$, which is the channel matrix affected by estimation error \cite{ma2015coordinated}:
\begin{gather}
\|\textbf{H}-\hat{\textbf{H}}\|_{2}\leq \gamma
\end{gather}
where $\gamma$-neighborhood is assumed to be Gaussian distributed with variance $\sigma_{\gamma}^{2}$ (as given in \cite{yang2016performance,ying2015joint}), and $\|\cdot\|_{2}$ denotes the Euclidean norm. It is assumed in the following analysis, that all the users' channels are static or quasi-static, so that their CSI is not outdated till the subsequent channel estimation.



In this paragraph, we describe every considered block of the system model from Fig. \ref{Fig_2}. In the first block the users' respective signals are superposition coded resulting in input $\textbf{x}_{k}$, which is precoded by the matrix $\hat{\textbf{P}}^{(u)}$ corresponding to user $u$, and passed through the pre-distorter $T(.)$. This precoded transmission is  passed to the transmitter LED array, where the nonlinearity $A(.)$ is (implicitly) applied. The AWGN vector at the $k^{th}$ instant is denoted by $\textbf{n}_{k}$. At the receiver, for each user $u$, the precoded transmission affected by nonlinearity is received by the $u^{th}$ photodiode array. Then, according to the QoS of the user,  an estimate of the superposition of users' signals $\hat{\textbf{l}}_{k}^{(u)}$ for each user $u$ is recovered by multiplying with $(\hat{\textbf{H}}^{(u)}\hat{\textbf{P}}^{(u)})^{\dagger}$ to form an estimate of the superposition coded input $\textbf{x}_{k}$ at each UE, given by $\hat{\textbf{l}}_{k}^{(u)}$. After this, the SIC is performed at each UE to recover the respective user's symbols. In simulations, we consider the errors induced at each SIC layer. The $\hat{l}_{k}$, which is chosen among the $\hat{\textbf{l}}_{k}^{(u)}$ from UEs with the highest QoS and the input signal is used in the feedback loop (which can be an RF uplink \cite{burchardt2014vlc,o2008visible} or power line communication (PLC)-based \cite{ma2013integration,elgala2011indoor}) to adjust the pre-distorter coefficients $r_{k}^{(i)}$ using the NLMS or the Chebyshev-NLMS algorithms.
\section{existing adaptive pre-distortion techniques}\label{Sec3}
In this section, we review some recently proposed adaptive pre-distortion techniques which mitigates the nonlinear  characteristics of an LED in VLC. The discussion in this section would pave the way for the exposition on the proposed closed-loop nonlinear precoding technique, and its performance analysis in the subsequent sections.

Adaptive pre-distortion \cite{kim2014adaptive} is a popular technique for compensating the varying nonlinear characteristics of an LED in order to invert and track the varying nonlinear LED characteristics. Among adaptive pre-distortion techniques, the work in \cite{kim2014adaptive} assumes a scaling weight $r_{k}$ as a pre-distorter, which is multiplied with the input symbol sequence $x_{k}$ at the transmitter. Consequently,
it passes through the LED nonlinearity $A(.)$ and measurement noise $n_{k}$ is added so as to form an estimated signal $\hat{l}_{k}$.
Consequently, the $r_{k}$ is updated as the following NLMS-based algorithm as proposed in \cite{kim2014adaptive}:
\begin{equation}
r_{k+1} = r_{k} + \eta_{k}e_{k}x_{k}
\end{equation}
where, $\eta_{k}=\frac{\eta}{|x_{k}|^2}$ is the step-size for the NLMS algorithm \cite{sayed2003fundamentals} at the $k^{th}$ iteration given in
\cite{sayed2003fundamentals,farhang2013adaptive} ($\eta$ being a small positive number), $e_{k} = \beta x_{k} - \hat{l}_{k}$ ($\beta$ being a biasing constant), where $\hat{l}_{k}$ is the output given by
$\hat{l}_{k}=A(r_{k}x_{k})+n_{k}$.
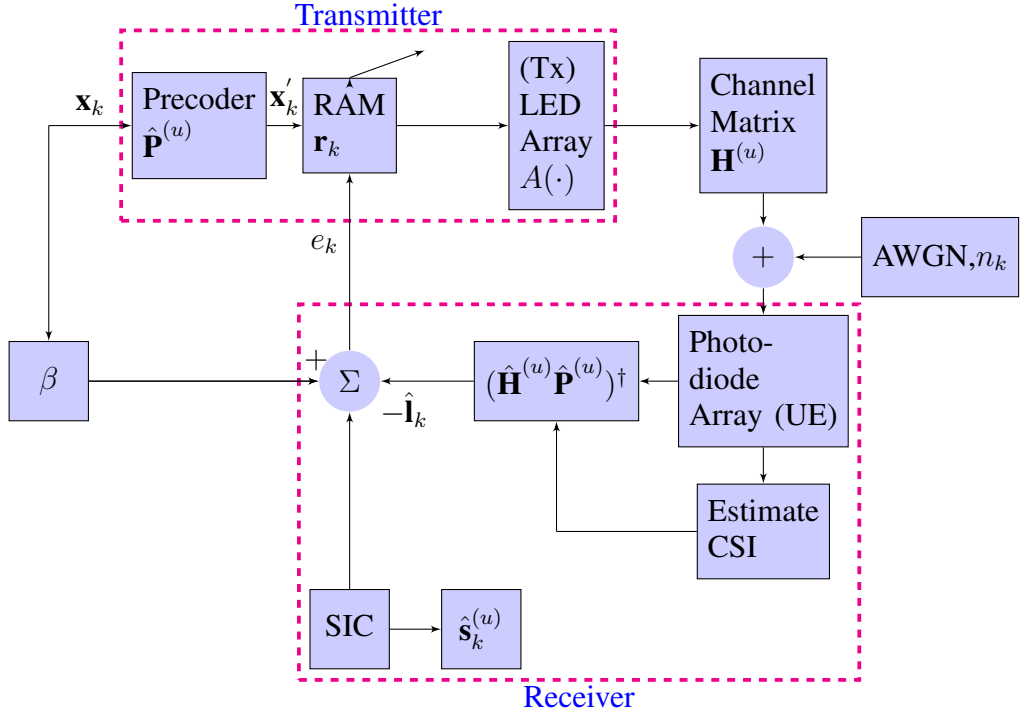
\begin{figure*}
\centering
\begin{tikzpicture}[auto, node distance=2cm,>=latex']
    \node [input, name=input] {};
    \node [block, right of=input] (PREC) {\symbolB \makecell[l]{Precoder \\$\hat{\textbf{P}}^{(u)}$}};
    \node [block, right of=PREC] (RAM) {\symbolB \makecell[l]{RAM\\$\textbf{r}_{k}$}};
    \node [block, right of=RAM,node distance=2.75cm] (LED) {\symbolB \makecell[l]{(Tx)\\LED\\ Array\\$A(\cdot)$}};
    \node [block, right of=LED,node distance=2.75cm] (H) {\symbolB \makecell[l]{Channel \\ Matrix\\$\textbf{H}^{(u)}$}};
    \node [block, below of=H,node distance=3.4cm] (PD) {\makecell[l]{\symbolB Photo-\\diode \\Array (UE)}};
    \node [circle, below of=H,node distance=1.75cm,fill=blue!20] (noise) {$+$};
    \node [block, right of=noise,node distance=2.35cm] (AWGN) {AWGN,$n_{k}$};
    \node [block, below of=LED,node distance=3.4cm] (PINVH) { \makecell[l]{ {$(\hat{\textbf{H}}^{(u)}\hat{\textbf{P}}^{(u)})^{{\dagger}}$}}};
    \node [block, below of=input,node distance=3.4cm] (scale) {$\beta$};
    \node [circle, below of=RAM,node distance=3.4cm,fill=blue!20] (summer) {$\Sigma$};
    \node [block, below of=summer,node distance=3.3cm] (SIC) {\makecell[l]{SIC}};
    \node [block, right of=SIC,node distance=1.75cm] (sym) {\makecell[l]{$\hat{\textbf{s}}_{k}^{(u)}$}};
    \node [block, below of=PD,node distance=2cm] (PQ) {\makecell[l]{Estimate \\CSI}};
    \node[line width=0.5mm,dashed,draw=magenta, fit= (PREC) (RAM)  (LED),label={[blue,label distance=-0.1cm]above:Transmitter}] (BB1){};
    \node[line width=0.5mm,dashed,draw=magenta, fit= (PINVH) (PD) (sym) (SIC),label={[blue,label distance=-0.1cm]south:Receiver}] (BB2){};

    \draw [draw,->] (input) -- node {$\textbf{x}_{k}$} (PREC);
    \draw [draw,->] (PREC) -- node {$\textbf{x}_{k}^{'}$} (RAM);
    \draw [draw,->] (RAM) -- node {} (LED);
    \draw [draw,->] (LED) -- node {} (H);
    \draw [draw,->] (H) -- node {} (noise);
    \draw [draw,->] (noise) -- node {} (PD);
    \draw [draw,->] (PD) -- node {} (PINVH);
    \draw [draw,->] (PINVH) -- node [near end] {$-\hat{\textbf{l}}_{k}$} (summer);
    \draw [draw,->] (SIC) -- node {} (sym);
    \draw [draw,->] (PD) -- node {} (PQ);
    \draw [draw,->] (input) -- node {} (scale);
    \draw [draw,->] (scale) -- node {} (summer);
    \draw [draw,->] (summer) -- node[pos=0.6] {$e_{k}$}
        node [near end] {} node {} (RAM);
        \draw [->] (scale) -- node[pos=0.99] {$+$}
        node [near end] {} (summer);
    \draw [->] (SIC) -- node[pos=0.49] {}
         (summer);
     \draw[->]
            (RAM.north) edge (5,1);
      \draw[-]           (PQ.west) -- (6.75,-5.4)  node[left] {};
      \draw [->]
             (6.75,-5.4)-- (PINVH.south);
                   \path[draw,->]
            (AWGN) edge (noise);
\end{tikzpicture}
\caption{Block diagram of the proposed system model.}\label{Fig_2}
\end{figure*}
However, the scaling factor based NLMS algorithm does not incorporate higher order statistics, which is essential to mitigate nonlinear transfer characteristics of the LED. Hence, NLMS based Chebyshev regression has been suggested recently \cite{mitra_chebyshev} where $\hat{\textbf{l}}_{k}=A(\sum_{\forall i}r_{k}^{(i)}T_{i}(\textbf{x}_{k}))+\textbf{H}^{(u)^{\dagger}}\textbf{n}_{k}$ (where $T_{i}(.)$ is the $i^{th}$ Chebyshev polynomial), and the squared error, $e_{k}^2 = (\beta x_{k} - \hat{l}_{k})^2$, is minimized. The reason for the choice of Chebyshev expansion in \cite{mitra_chebyshev} is twofold: a) Chebyshev expansion is a nonlinear function of the input and is therefore more suited to invert nonlinear characteristics of LED, and b) the Chebyshev expansion optimizes the min-max approximation error optimality criterion which is one of the most important reasons for preferring Chebyshev expansion based pre-distortion over other orthogonal polynomial-expansion based pre-distortion techniques. This makes the Chebyshev pre-distorter a better and more suitable pre-distortion technique (as compared to lookup table and adaptive NLMS based pre-distortion as suggested in \cite{kim2014adaptive}).
\section{Existing two-users' QoS guaranteed QR-precoded NOMA}\label{Sec4}
In this section, a technique for two-user NOMA is reviewed, where both users have similar channels \cite{ding}. Such a scenario is characterized by diverse QoS requirements, not necessarily by diverse channel conditions. In the scenario considered in this paper, we generalize NOMA to arbitrary number of users in a correlated-channel scenario, and hence it will be insightful to review the work given in \cite{ding} prior to the proposed algorithm in Section-\ref{Sec5}.
From \cite{ding}, a two user scenario is considered in which both users experience possibly similar channels,
$\textbf{H}^{(u)}$, with $u\in \{1,2\}$ but with varying QoS requirements.
\begin{equation}
\textbf{x}_{k} = \sum_{u=1}^{2}\sqrt{P^{(u)}}\textbf{s}^{(u)}_{k}
\end{equation}
In \cite{ding} a user (say User 2) is selected, whose experience we would like to improve selectively, and the QR decomposition of the transpose of the user's corresponding channel is considered such that $\textbf{H}^{(2)^{T}}=\textbf{Q}^{(2)}\textbf{R}^{(2)}$. Here, $\textbf{Q}^{(2)}\in \mathbb{R}^{M_{R}\times M_{R}}$ is a unitary matrix and $\textbf{R}^{(2)}\in\mathbb{R}^{M_{R}\times M_{T}}$ is an upper triangular matrix obtained by QR-decomposition. 
Next, the superposition coded vector is precoded by the matrix $\textbf{P}^{(u)}$ prior to transmission (in other words $\textbf{x}_{k}^{'}=\textbf{Q}^{(2)}\textbf{x}_{k}$). As a result, User 2 experiences a lower triangular matrix which facilitates recovery by SIC at its corresponding UE. Thus, User 2, which has a higher QoS requirement, experiences a diversity gain.

On the other hand, User 1's signal is recovered by the zero forcing solution upon $\textbf{x}_{k}^{'}$ by the matrix $(\textbf{H}^{(1)}\textbf{Q}^{(2)})^{\dagger}$. This step results in inducing different channel condition at UE with lesser QoS due to multiplication by the Wishart matrix corresponding to $(\textbf{H}^{(1)}\textbf{Q}^{(2)})^{\dagger}$. In other words, the multiplication of $(\textbf{H}^{(1)}\textbf{Q}^{(2)})^{\dagger}$, i.e. the equivalent channel condition, at User 1 is degraded as compared to that of User 2, to the extent controlled by the individual users' QoS. This is done such that User 1's QoS are still met while improving the performance of User 2 through QR-precoding and QoS based power allocation \cite{ding}.
\section{Proposed Chebyshev pre-distortion for MIMO NOMA-VLC systems}\label{Sec5}
In this work, a MIMO-VLC system is considered, in which it is assumed that all UEs are equipped with a photodiode array. In order to learn the pre-distorter weights in a MIMO system, we need to minimize the cost function, $J_{\text{MIMO}}=\min \limits_{r_{k}^{(i)}}\mathbb{E}[(\|\beta \textbf{x}_{k}-\hat{\textbf{l}}_{k}\|_{2}^2)]$, involving Euclidean norm, with respect to the weights $\textbf{r}_{k}=[r_{k}^{(i)}]$.
$J_{\text{MIMO}}$ could also be written as:
\begin{align}
J_{\text{MIMO}} &= \sum_{\forall x_{k}\in\textbf{x}_{k},\hat{l}_{k}\in\hat{\textbf{l}}_{k}}(\beta x_{k}-\hat{l}_{k})^2\\ \nonumber
\hspace{0.5cm}\text{s.t.} \hspace{1cm}& \Re[\sum_{\forall i}r_{k}^{(i)}T_{i}(\textbf{x}_{k})] > 0, \\ \nonumber
&\Im[\sum_{\forall i}r_{k}^{(i)}T_{i}(\textbf{x}_{k})] > 0
\end{align}
As the Euclidean norm is convex in the Chebyshev coefficient weights, stochastic gradient descent based adaptation is guaranteed to converge to the global optimum of the cost function. The above constraints link this technique to the average power constraint as given in \cite{zhang2016bandlimited}. For the peak power constraint mentioned in \cite{zhang2016bandlimited}, we assume soft clipping by the Rapp LED-nonlinearity model which has been used in the literature to model white LEDs \cite{elgala2010led,mitra2016adaptive} (OSRAM, Golden DRAGON, LA W57B, LY W57B), and mitigate this distortion by adaptive pre-processing.

From Fig. \ref{Fig_2}, the estimate $\hat{l}_{k}$ is derived from a feedback uplink from the receiver (which can be achieved by time-division duplexing, frequency division duplexing or by a separate RF/PLC uplink \cite{burchardt2014vlc,o2008visible}). Each UE sends an estimate, $\hat{l}_{k}^{(u)}$, to the transmitter.  The transmitter chooses $\hat{l}_{k}$ estimates from the UE with the best QoS since it has the maximum signal to interference and noise ratio (SINR).
The pre-distorter coefficients $r_{k}^{(i)}$ are initialized by a feasible weight vector found by random search, and then updated by taking gradient of $J_{\text{MIMO}}$ with respect to $r_{k}^{(i)}$ via a similar stochastic gradient NLMS based approach as:
\begin{gather}
 p_{k+1}^{(i)} = r_{k}^{(i)} + \sum_{\forall x_{k}\in\textbf{x}_{k},\hat{l}_{k}\in\hat{\textbf{l}}_{k}}\frac{\eta}{\sum_{\forall i}T_{i}(x_{k})^2}e_{k}T_{i}(x_{k})
 \end{gather}
$\forall x_{k}$ in $\textbf{x}_{k}$.
Consequently, the positivity of the Chebyshev expansion is enforced by projecting the gradient to the second quadrant in the Argand plane as follows:
\begin{align}  
\textbf{r}_{k+1} & = \textbf{p}_{k+1}, \hspace{0.25cm}\text{if $\Re[\sum_{\forall i}p_{k+1}^{(i)}T_{i}(\textbf{x}_{k}) > 0$], $\Im[\sum_{\forall i}p_{k+1}^{(i)}T_{i}(\textbf{x}_{k}) > 0$]} \\ \nonumber
&=\textbf{r}_{k}, \hspace{0.3cm}\text{otherwise}
\end{align}
$r_{k}^{(i)}$ denotes the $i^{th}$ pre-distorter weight at $k^{th}$ time instant and $e_{k} = \beta x_{k}-\hat{l}_{k}$ for each $x_{k}$ in $\textbf{x}_{k}$.



In this paragraph, we prove the feasibility of a hybrid Chebyshev pre-distorter that is coupled with precoding  to extend its suitability to NOMA channels. The Chebyshev pre-distortion in \cite{mitra_chebyshev} is modified by adding a precoder $\textbf{P}^{(u)}$ at the transmitter for each user $u$ (in addition to the Chebyshev pre-distorter which mitigates the LED nonlinearity), and the symbols are recovered from $\hat{l}_{k}^{(u)}$ by multiplying with $(\textbf{H}^{(u)}\textbf{P}^{(u)})^{\dagger}$ and performing SIC at each user. Next, we show that the recovery of symbols is made possible by using this pre-distorter in the presence of LED nonlinearity. The output $\hat{\textbf{l}}_{k}^{(u)}$ can be written as follows:
\begin{gather}
\hat{\textbf{l}}_{k}^{(u)}=(\textbf{H}^{(u)}\textbf{P}^{(u)})^{\dagger}[\textbf{H}^{(u)}A(\sum_{\forall i}r_{k}^{(i)}T_{i}(\textbf{P}^{(u)}\textbf{x}_{k}))+\textbf{n}_{k}]
\end{gather}
From extension of Bussgang's theorem \cite{zhang2012general}, we can write 
\begin{gather}\label{buss}
\hat{\textbf{l}}_{k}^{(u)} = \alpha \textbf{x}_{k} + (\textbf{H}^{(u)}\textbf{P}^{(u)})^{\dagger}\textbf{H}^{(u)}\bm{\delta} + (\textbf{H}^{(u)}\textbf{P}^{(u)})^{\dagger} \textbf{n}_{k}
\end{gather}
where $\bm{\delta}$ is a zero mean uncorrelated distortion noise sequence with variance $\sigma_{\delta}^{2}$ and $\alpha$ is a scaling correlation factor between $A(T(\cdot))$ with its argument. $\textbf{H}^{(u){\dagger}}$ is the pseudo inverse of a matrix $\textbf{H}^{(u)}$ and is assumed to be a left-inverse of $\textbf{H}$. Hence, $\hat{l}_{k}^{(u)}$ lies in the same subspace of $x_{k}$. Hence any precoding as done in the case of the linear scenario in \cite{ding}, works for the nonlinear closed loop system model as well.

Next, we derive an expression for overall SNR for the proposed CR-inspired NOMA VLC system employing Chebyshev pre-distortion, which will be used in further sections to design suitable precoding matrices, and derive the power allocation strategy for the considered NOMA-MIMO scenario.

In order to study the impact of nonlinearity on the overall detection performance, let us define the following terms: $\textbf{Z}^{(u)} = (\textbf{H}^{(u)}\textbf{P}^{(u)})^{\dagger}$ and $\textbf{W}^{(u)} = (\textbf{H}^{(u)}\textbf{P}^{(u)})^{\dagger}\textbf{H}^{(u)}$. Thus, from (\ref{buss}) the $u^{th}$ user is decoded with the following SINR, $\Gamma^{(u)}$,
\begin{gather}
 \Gamma^{(u)} = \frac{\alpha^{2}P^{(u)}}{\Bigg(\splitfrac{\alpha^{2}\sum_{\forall b>u}P^{(b)}+\text{Tr}({\textbf{W}^{(u)^{T}}\textbf{W}^{(u)}})\sigma_{\delta}^{2}}{+\text{Tr}({\textbf{Z}^{(u)^{T}}\textbf{Z}^{(u)}})\sigma_{v}^{2}}\Bigg)}
\end{gather}
, where $\text{Tr}(.)$ denotes the trace operator. 

Let the noise floor be indicated by $\sigma_{v}^{{(u)'}^2} = \text{Tr}(\textbf{Z}^{(u)^{T}}\textbf{Z}^{(u)})\sigma_{v}^{2}$. The variance of $\delta$ denoted by $\sigma_{\delta}^{2}$ can be written as follows for the NLMS algorithm \cite{haykin2008adaptive}:
\begin{gather}
\sigma_{\delta}^{2} = \frac{\eta}{2}\sigma_{v}^{{(u)'}^2}
\end{gather}

The total error variance (excess mean squared error along with the noise floor) can be quantified as follows:
\begin{gather}  
\sigma_{o}^{{(u)^2}}=\frac{\text{Tr}({\textbf{W}^{(u)^{T}}\textbf{W}^{(u)}})\sigma_{\delta}^{2}+\sigma_{v}^{{(u)'}^2}}{\alpha^2}
\end{gather}
Upon convergence of Chebyshev coefficients to optimal values (as $\alpha \to 1$), the SINR for the $u^{th}$ user is as follows:
\begin{gather}\label{sinr}
 \Gamma^{(u)} = \frac{P^{(u)}}{\sum_{\forall b>u}P^{(b)}+\sigma_{o}^{(u)^2}}
\end{gather}
Please note that in the above expression, the transmitted symbols of each user is assumed to be normalized to unit power, and $b$ is the index variable denoting the SIC layer.
\section{proposed precoder design}\label{Sec6}
In this section, design of the set of precoding matrices is described in detail for the considered correlated MIMO NOMA-VLC channel scenario. In (\ref{sinr}), the SINR experienced by the $u^{th}$ user depends on the pseudo-inverse of the overall channel matrix $\textbf{H}^{(u)}\textbf{P}^{(u)}$. 
Traditional beamforming solutions (as in \cite{ding2015application}) lack suitability, as some of the channel matrices have correlated eigenvectors, thereby reducing the available degrees of freedom to accommodate more users. 

However, in MIMO NOMA-VLC, the ill-conditioned channel matrix is used to allow us to have many levels of QoS for each user by the precoding technique described below. To design a precoding matrix $\textbf{P}^{(u)}$, let us consider SVD of the $u^{th}$ users' channel matrix $\textbf{H}^{(u)}$ as follows:
\begin{gather}
\textbf{H}^{(u)}=\textbf{U}^{(u)}\bm{\Sigma}^{(u)}\textbf{V}^{(u)^{T}}
\end{gather}
where
\begin{gather}
\bm{\Sigma}^{(u)}=\text{diag}(\sigma^{(1,(u))},\sigma^{(2,(u))},...,\sigma^{(c^{(u)},(u))},0,0,0...)
\end{gather}
$c^{(u)}$ denotes the rank of the $u^{th}$ user's matrix. The matrices $\textbf{U}^{(u)}$ and $\textbf{V}^{(u)}$ consists of eigenvectors of $\textbf{H}^{(u)}\textbf{H}^{(u)^{T}}$ and  $\textbf{H}^{(u)^{T}}\textbf{H}^{(u)}$. 
Thus the precoding matrix $\textbf{P}^{(u)}$ for each user would be given by:
\begin{gather}
\textbf{P}^{(u)}=\textbf{V}^{(u)}\bm{\Sigma}^{(u)^{\lambda^{(u)}-1}}
\end{gather}
where $\lambda^{(u)}$ is the exponent assigned to $u^{th}$ user and
\begin{gather}
\bm{\Sigma}^{(u)^{\lambda^{(u)}-1}}= \nonumber\text{diag}(\sigma^{{(1,(u))}^{\lambda^{(u)}-1}},\sigma^{{(2,(u))}^{\lambda^{(u)}-1}},...,\\ \nonumber
\sigma^{({c^{(u)},(u))}^{\lambda^{(u)}-1}},0,0,0...)
\end{gather}
Upon such precoding, each user would experience a virtual ``parallel" channel given by the following equation (assuming $\alpha\to1$ and $\sigma_{\delta}^2\to 0$):
\begin{gather}\label{eqsys}
\hat{\textbf{l}}_{k}^{(u)}\approx\alpha \textbf{x}_{k}+(\textbf{U}^{(u)}\bm{\Sigma}^{(u)^{\lambda^{(u)}}})^{\dagger}\textbf{n}_{k}
\end{gather}
Thus each user would experience a channel $\textbf{U}^{(u)}\bm{\Sigma}^{(u)^{\lambda^{(u)}}}$ with a distinctive trace and hence inducing diverse channel conditions with proper choice of $\lambda^{(u)}$. Further, as in \cite{ding}, we may want to improve the condition number of one of the user's (say $u_{1}$) channel with a higher QoS by QR factorization technique as done in \cite{ding}, i.e. we may assign $\textbf{P}^{(u)}
=\textbf{V}^{(u)}\bm{\Sigma}^{(u)^{\lambda^{(u)}-1}}\textbf{Q}^{(u_{1})}$, where
$\textbf{U}^{(u_{1})}\bm{\Sigma}^{(u_{1})^{\lambda^{(u_{1})}}}=\textbf{R}^{(u_{1})^T}\textbf{Q}^{(u_{1})^{T}}$. However, the QR factorization technique is not essential as the generalized power diversity is achieved by varying $\lambda^{(u)}$. The power allocation strategy for the $u^{th}$ user according to varying levels of user-QoS is detailed in the next section.
\section{proposed power allocation strategy}\label{Sec7}
In this section, we derive the power allocation strategy that needs to fulfill a given QoS for a user. QoS for the $u^{th}$ user is typically designated by a reliable transmission rate $R^{(u)}$. This can be written as:
\begin{gather}
 \log_{2}(1+\Gamma^{(u)})\geq R^{(u)}
\end{gather}
Let us define $\epsilon^{(u)}=2^{R^{(u)}}-1$. Thus,
\begin{gather}
\frac{P^{(u)}}{\sum_{\forall b>u}P^{(b)}+\sigma_{o}^{(u)^2}}\geq \epsilon^{(u)}
\end{gather}
Rearranging terms we get,
\begin{gather}
P^{(u)}\geq \epsilon^{(u)} \sum_{b>u}P^{(b)} + \epsilon^{(u)}\sigma_{o}^{(u)^2}
\end{gather}
adding $\sum_{b>u}P^{(b)}$ on both sides:
\begin{gather}
1\geq P^{(u)}+\sum_{b>u}P^{(b)}\geq (1+\epsilon^{(u)})\sum_{b>u}P^{(b)}+
\epsilon^{(u)}\sigma_{o}^{(u)^2}
\end{gather}
The above expression for the residual interference at $b^{th}$ layer of SIC can be re-written as follows:
\begin{gather}
\sum_{b>u}P^{(b)}\leq \frac{1-\epsilon^{(u)}\sigma_{o}^{(u)^2}}{1+\epsilon^{(u)}}
\end{gather}
Assuming $1-P^{(u)}\geq 0 $, the following power allocation for each user is:
\begin{gather}\label{powalloc}
P^{(u)} = \min \Bigg(1,\epsilon^{(u)}\frac{(1+\sigma_{o}^{(u)^2})}{1+\epsilon^{(u)}}\Bigg)
\end{gather}
It is to be noted that this power allocation strategy is derived by considering users' differing QoS requirements with each user having a specific $\textbf{Z}^{(u)}$ depending on its requirement $\epsilon^{(u)}$, and is the minimum power needed to meet each user's QoS. Thus this power-allocation technique is a better and generalized power allocation technique as compared to the gain ratio power allocation in \cite{hanaa} for the considered MIMO NOMA-VLC channel considering correlated channels. Further, since the lower bound is chosen to allocate the power in (\ref{powalloc}), the proposed power allocation assigns the minimum power that satisfies the individual constraints of users, and causes minimal interference to users in previous layer in SIC. 
\subsection{Choice of $\lambda^{(u)}$ for each user}
In the considered NOMA system, it is very important to induce power diversity, which in this work, is provided by precoding. We claim the following update rule for assigning $\lambda^{(u)}$ to the $u^{th}$ user:
\begin{gather}\label{rule}
{\lambda^{(u+1)}} = {\lambda^{(u)}\frac{\log(\sum_{g=1}^{c^{(u)}}|\sigma^{(g,(u))}|^2)}{\log(\sum_{g=1}^{c^{(u+1)}}|\sigma^{(g,(u+1))}|^2)}}
\end{gather}
Next, an insight into the assignment of $\lambda^{(u)}$ from the perspective of statistical mechanics based learning, is provided by the following theorem. 
\begin{thm}\label{app1}
\it{Choice of $\lambda$ according to (\ref{rule}) maximizes the increment in sum-rate by admitting another user in the proposed  MIMO CR-NOMA-VLC system.}
\end{thm}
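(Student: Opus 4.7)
The plan is to express the sum-rate increment upon admitting user $u+1$ purely as a function of $\lambda^{(u+1)}$, maximise it under the CR-inspired NOMA admissibility constraint, and show that the resulting stationarity condition collapses to (\ref{rule}).

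First I would invoke the defining feature of CR-inspired NOMA: the rates of the already-served users $1,\ldots,u$ are held at their pre-admission QoS values, so by (\ref{sinr}) the incremental sum-rate reduces to $\Delta R=\log_{2}(1+\Gamma^{(u+1)})$. Substituting the SVD-based precoder into (\ref{eqsys}) yields $\mathbf{Z}^{(u+1)}=(\mathbf{U}^{(u+1)}\bm{\Sigma}^{(u+1)^{\lambda^{(u+1)}}})^{\dagger}$, so that the effective noise floor in (\ref{sinr}) satisfies $\sigma_o^{(u+1)^{2}}\propto \mathrm{Tr}\bigl(\bm{\Sigma}^{(u+1)^{-2\lambda^{(u+1)}}}\bigr)$; together with the power-allocation law (\ref{powalloc}), this makes $\Delta R$ an explicit scalar function of $\lambda^{(u+1)}$ alone once the earlier $\lambda^{(u)}$ are frozen.

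Next, I would push $\Delta R$ through a high-SNR/trace-Jensen approximation. Using $\log\det(\mathbf{I}+\mathbf{X})\approx\mathrm{Tr}\log\mathbf{X}$ for large $\mathbf{X}$ together with the power-mean inequality bound $\mathrm{Tr}(\bm{\Sigma}^{2\lambda})\approx\bigl(\mathrm{Tr}(\bm{\Sigma}^{2})\bigr)^{\lambda}$ up to a rank-dependent normalisation, the leading-order contribution of the new user to $\Delta R$ becomes $\lambda^{(u+1)}\log\bigl(\sum_{g=1}^{c^{(u+1)}}|\sigma^{(g,(u+1))}|^{2}\bigr)$; the analogous quantity associated with the most recently admitted predecessor is $\lambda^{(u)}\log\bigl(\sum_{g=1}^{c^{(u)}}|\sigma^{(g,(u))}|^{2}\bigr)$. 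Finally, I would invoke a marginal-balance argument: at the maximiser the ``effective capacity'' contributed by the newly admitted SIC layer must equal that of the immediately preceding layer, else one could rescale $\lambda^{(u+1)}$ and strictly improve $\Delta R$ while keeping all prior QoS constraints active. Equating those two leading-order expressions and solving for $\lambda^{(u+1)}$ reproduces (\ref{rule}) line for line.

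The main obstacle I anticipate is rigorously justifying the trace-Jensen approximation when the singular-value spectra or the ranks $c^{(u)}$ differ markedly across users, because the $\lambda$-exponent distorts them non-linearly and the power-mean bound is only tight for near-degenerate spectra. The cleaner route, which the authors foreshadow by flagging ``statistical-mechanics based learning'', is to identify $\lambda^{(u)}$ with an inverse temperature and $\log\bigl(\sum_{g}|\sigma^{(g,(u))}|^{2}\bigr)$ with an energy, in which case (\ref{rule}) falls out as a thermal-equilibrium / equipartition condition between the new user and its predecessor, and the approximation constants never need to be pinned down explicitly.
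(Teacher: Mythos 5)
Your route does not reproduce the paper's argument, and as written it has two genuine gaps. First, the analytic core of your derivation is the step $\mathrm{Tr}\bigl(\bm{\Sigma}^{2\lambda}\bigr)\approx\bigl(\mathrm{Tr}(\bm{\Sigma}^{2})\bigr)^{\lambda}$, which is not an identity and is not even a one-sided bound that you can use uniformly: it is tight only for near-degenerate spectra (you concede this yourself), and the whole point of the precoder in Section~\ref{Sec6} is to exploit ill-conditioned channels whose singular values are far from equal, so the regime where you need the approximation is exactly the regime where it fails. Second, your ``marginal-balance'' step is asserted rather than proved: with the earlier $\lambda^{(u)}$ frozen, the increment $\Delta R=\log_{2}(1+\Gamma^{(u+1)})$ is typically a \emph{monotone} function of $\lambda^{(u+1)}$ (the noise floor $\mathrm{Tr}\bigl(\bm{\Sigma}^{(u+1)^{-2\lambda^{(u+1)}}}\bigr)$ moves monotonically, in a direction set by whether the singular values exceed unity), so there is no interior stationary point, and no argument is given for why optimality forces the ``effective capacity'' of the new SIC layer to equal that of the preceding one. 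Without that, equating $\lambda^{(u+1)}\log\bigl(\sum_{g}|\sigma^{(g,(u+1))}|^{2}\bigr)$ to $\lambda^{(u)}\log\bigl(\sum_{g}|\sigma^{(g,(u))}|^{2}\bigr)$ is a restatement of (\ref{rule}), not a derivation of it.

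Your closing remark points at the right direction but does not carry it out, and that direction is in fact the paper's proof. The paper observes that (\ref{rule}) is equivalent to keeping $\lambda^{(u)}\log S^{(u)}=\kappa$ constant, introduces an auxiliary variable $T^{(u)}$ with $\lambda^{(u)}=\kappa\log T^{(u)}$ and a finite total $\sum_{u}T^{(u)}=T$, and shows that the induced measure on the increment $\Delta\lambda$ is a Gibbs distribution as in (\ref{gibbs}). It then uses the identity $\kappa(H-\log Z)=\mathbb{E}[\Delta\lambda]$ from (\ref{prove}), takes the high-SNR rate-gap approximation (\ref{gap}) for CR-NOMA, and applies Jensen's inequality to lower-bound the expected rate increment by a quantity that is increasing in the entropy $H$, as in (\ref{final}); the maximum-entropy property of the Gibbs distribution then yields the claimed maximization. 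If you want to salvage your writeup, you should replace the trace-Jensen and marginal-balance steps with this construction: the auxiliary-variable/Gibbs identification, the entropy identity, and the Jensen bound on (\ref{gap}) are the missing mechanisms, and none of them requires the spectral degeneracy your approximation implicitly assumes.
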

\begin{proof}
Please refer to the Appendix-A.
\end{proof}
\section{Analytical expression for BER of square $M$-QAM}\label{Sec8}
In this section, an analytical upper bound for BER of $M$-QAM is derived for the proposed precoding algorithm. This analytical expression is necessary to predict the system performance without computationally intensive Monte-Carlo simulations and for overall calibration of the wireless link for link-optimization \cite{dixit2014performance}. We consider square $M$-QAM, because of its widespread use in the VLC literature \cite{lee2012visible,tsonev20143,chow2013adaptive}.

Let us denote a square $M$-QAM constellation with amplitude and phase set given by $\{A_{n}\}_{n=1}^{\sqrt{M}},\{\phi_{m}\}_{m=1}^{\sqrt{M}}$ such that $M$-QAM constellation is expressed as $\{A_{n}\exp(j\phi_{m})\}_{n,m=1}^{\sqrt{M}}$. Hence, at each layer, we can have the following recursion for the respective set of modulii $\mu_{b,m,n}$ for $\hat{l}_{k}$ (using laws of vector addition):
\begin{gather}
\mu_{1,(m,n)} = \\ \nonumber
\sqrt{P^{(1)}|A_{m}|^2+P^{(2)}|A_{n}|^2+ 2\sqrt{P^{(1)}P^{(2)}}A_{m}^{*}A_{n}\cos(\phi_{n})}\\ \nonumber
\mu_{b+1,(m,n)} = \{\{A_{n}\exp(j\phi_{m})\} \boxplus \mu_{b,(m,n)}\}_{n,m=1}^{\sqrt{M}}\\ \nonumber \sqrt{P^{(b)}|A_{m}|^2+|\mu_{b,(m,n)}|^2+ 2\sqrt{P^{(b)}}\mu_{b,(m,n)}A_{m}^{*}\cos(\phi_{n})}\\ \nonumber
\forall m,n=1,2,\cdots, \sqrt{M}
\end{gather}
where we denote vector addition operation by $\boxplus$.
Finally the probability of bit-error
$P_{\sqrt{M}}$ for an equivalent $\sqrt{M}$-PAM modulation scheme could be written as (i.e. the BER of square $M$-QAM modulation can be assumed as two independent $\sqrt{M}$-PAM modulation \cite{proakis2007fundamentals}):
\begin{gather}\label{expression}
P_{\sqrt{M}}\approx 2\Big(\frac{\sqrt{M}-1}{\sqrt{M}}\Big)\sum_{\forall b}\sum_{\forall m}\sum_{\forall n} Q\Bigg(\sqrt{\frac{|\mu_{b+1,(m,n)}-\mu_{b,(m,n)}|^2}{\sigma_{o}^{(u)^2}}}\Bigg)
\end{gather}
The approximation has been made under the assumption of Gaussian residual $\bm{\delta}$. Typically, $\bm{\delta}$ can assume any distribution, and hence its BER characteristics would be difficult to obtain. However, since the Gaussian distribution maximizes the differential entropy given a covariance matrix, the Gaussian assumption gives us an upper bound for the BER expression. 
Finally, the expression for square $M$-QAM can be given by the following equation by considering the real and imaginary parts of the $M$-QAM constellation as independent:
\begin{gather}\label{qam}
P_{M}\approx1-(1-P_{\sqrt{M}})^{2}
\end{gather}
Thus (\ref{qam}) gives us an expression for probability of error for square $M$-QAM for the proposed SVD based precoding technique in a NOMA-VLC scenario.
For the special case of 4-QAM, the BER expression can be written as:
\begin{gather}\label{qam4}
P_{\text{QAM}_{\sqrt{M}=2}}\approx\sum_{\forall b}\sum_{m=1}^{\sqrt{M}}\sum_{n=1}^{\sqrt{M}}Q\Bigg(\sqrt{\frac{|\mu_{b+1,(m,n)}-\mu_{b,(m,n)}|^2}{\sigma_{o}^{(u)^2}}}\Bigg)
\end{gather}
and,
\begin{gather}\label{qam44}
P_{\text{QAM}_{\sqrt{M}=2}}\approx1-(1-P_{\text{QAM}_{\sqrt{M}=2}})^{2}\approx 2 P_{\text{QAM}_{\sqrt{M}=2}}
\end{gather}
where the relation between $\mu_{b+1}$ and $\mu_{b}$ can be written as follows (considering the amplitude set to be $\pm 1$ and phase set to be $\pm\frac{\pi}{4}$) :
\begin{gather}
\mu_{b+1}=\sqrt{P^{(b)}+|\mu_{b}|^2+\sqrt{2P^{(b)}}\mu_{b}}
\end{gather}
It can be readily seen that the above equation is the special case of (\ref{expression})
assuming the amplitude set to be only modulii $\{\pm 1\}$ and phase $\pm\frac{\pi}{4}$.
\subsection{BER in presence of estimation error}
In this section, we derive the BER for the proposed MIMO NOMA-VLC system for $M$-QAM in the presence of estimation error. We assume imperfect CSI both at the transmitter and the receiver. Let, $\bm{\Delta\Sigma_{1}}$ and $\bm{\Delta\Sigma_{2}}$ denote the non-diagonal error matrices at transmitter and the receiver respectively, such that $\hat{\textbf{H}}^{(u)}\hat{\textbf{V}}^{(u)}=\textbf{U}^{(u)}(\bm{\Sigma}^{(u)}+\bm{\Delta\Sigma_{1}})$, and $\hat{\textbf{U}}^{(u)^{T}}\hat{\textbf{H}}^{(u)} = (\bm{\Sigma}^{(u)}+\bm{\Delta\Sigma_{2}})\textbf{V}^{(u)}$. At the steady state, we transmit the following precoded broadcast:
\begin{gather}
\textbf{x}_{k}^{'} = \textbf{V}^{(u)}(\bm{\Sigma}^{(u)}+\bm{\Delta\Sigma_{1}})^{\lambda^{(u)}-1}\textbf{x}_{k}
\end{gather}
This can be approximated as follows under the assumption of small $\bm{\Delta\Sigma_{1}}$:
\begin{gather}
\textbf{x}_{k}^{'} \approx \textbf{V}^{(u)}\bm{\Sigma}^{(u)^{\lambda^{(u)}-1}}(\textbf{I}+(\lambda^{(u)}-1)\bm{\Delta\Sigma_{1}}^{-1}\bm{\Sigma_{1}})\textbf{x}_{k}
\end{gather}
Upon passing $\textbf{x}_{k}^{'}$ through the proposed pre-distorter, the LED nonlinearity, and the channel $\textbf{H}^{(u)}$, the received vector can be written as follows:
\begin{gather}
\hat{\textbf{l}}_{k}\approx (\textbf{I}-\lambda^{(u)}\bm{\Sigma}_{2}^{-1}\bm{\Delta\Sigma}_{2})(\textbf{I}+(\lambda^{(u)}-1)\bm{\Delta\Sigma_{1}}^{-1}\bm{\Sigma_{1}})\textbf{x}_{k}+
(\textbf{U}^{(u)}\bm{\Sigma}^{(u)^{\lambda^{(u)}}})^{\dagger}\textbf{n}+\textbf{W}^{(u)}\bm{\delta}
\end{gather}
Considering only the first order terms, the effective noise power can we rewritten as:
\begin{gather}
\sigma_{o}^{{'}^2}=\sigma_{o}^2 + \lambda^{(u)^2}\mathbb{E}[\text{Tr} [\textbf{x}_{k}\textbf{x}_{k}^{T}\bm{\Delta\Sigma}_{2}^{T}\bm{\Sigma}_{2}^{-T}\bm{\Sigma}_{2}^{-1}\bm{\Delta\Sigma}_{2}]] + (1-\lambda^{(u)})^{2}\mathbb{E}[\text{Tr} [\textbf{x}_{k}\textbf{x}_{k}^{T}\bm{\Delta\Sigma}_{1}^{T}\bm{\Sigma}_{1}^{-T}\bm{\Sigma}_{1}^{-1}\bm{\Delta\Sigma}_{1}]] 
\end{gather}
Assuming independent statistics of $\|\textbf{x}_{k}\|_{2}^{2}$, $\bm{\Delta\Sigma}_{2}$, and $\bm{\Delta\Sigma}_{1}$,
we can write as follows:
\begin{gather}
\sigma_{o}^{{(u)'}^2}=\sigma_{o}^{(u)^2} + \sigma_{\gamma}^{2}\lambda^{(u)^2}\text{Tr} [\bm{\Sigma}_{2}^{-T}\bm{\Sigma}_{2}^{-1}]\mathbb{E}[\|\textbf{x}_{k}\|_{2}^{2}] + \sigma_{\gamma}^{2}(1-\lambda^{(u)})^{2}\text{Tr} [\bm{\Sigma}_{1}^{-T}\bm{\Sigma}_{1}^{-1}]\mathbb{E}[\|\textbf{x}_{k}\|_{2}^{2}] 
\end{gather}

Thus the approximate expression for BER for $M$-QAM can be written as:
\begin{gather}\label{expression}
P_{\sqrt{M}}\approx2\Big(\frac{\sqrt{M}-1}{\sqrt{M}}\Big)\sum_{\forall b}\sum_{\forall m}\sum_{\forall n} Q\Bigg(\sqrt{\frac{|\mu_{b+1,(m,n)}-\mu_{b,(m,n)}|^2}{\sigma_{o}^{(u)'^2}}}\Bigg)
\end{gather}
Finally, the expression for square $M$-QAM\footnote[9]{For imperfect SIC from previous layers, the expression for BER can be rewritten by changing/expanding the definition of $\mu_{b}$. If a residual interference $I^{(b)}$ exists at layer $b$, $\forall b<u$ then $\mu_{b+1,(m,n)} = \{\{A_{n}\exp(j\phi_{m})\} \boxplus \mu_{b,(m,n)}\}_{n,m=1}^{\sqrt{M}} \boxplus \sum_{b<u} I^{(b)} P^{(b)}$}  can be given by the following equation by considering the real and imaginary parts of the $M$-QAM constellation independent:
\begin{gather}\label{qam}
P_{M}=1-(1-P_{\sqrt{M}})^{2}
\end{gather}

\section{Simulations}\label{Sec9}
A typical channel matrix was generated mathematically from \cite{hanaa}, for a room of size $5\text{m}\times5\text{m}\times 3\text{m}$, with refractive index of lens 1.5, height of LED-2.25m, area of the photodiode (PD) $1\text{cm}^{2}$, and the spacing between transmitter LED array photodiode array 0.4m. FOV of the photodetectors is kept fixed at 60 degrees. A $2\times2$ MIMO channel was considered for each user. All users were chosen close to each other such that all experience almost similar channel conditions. For \textit{power allocation} for each user, (\ref{powalloc}) was used in all simulations. These simulation parameters are tabulated in Table \ref{tabb}. For all the simulations $10^{6}$ symbols were considered with ensemble of 3000 Monte-Carlo runs using MATLAB.

The VLC quasi-static channel model \cite{ghassemlooy2012optical}, used for simulating the channel matrix is given by the following equation \cite{hanaa,wang2015multiuser} is used for simulations:
\begin{align}
[\textbf{H}]_{ij}^{(u)}=&\frac{\mathcal{A}_{e,i}}{d_{ij}^2\sin^{2}\Psi}R(\phi_{ij})\cos\theta_{ij}&&,0<\phi_{ij}<\Psi\\ \nonumber
=&0&&,\text{otherwise}
\end{align}
$\mathcal{A}_{e,i}$ denotes the area of the $i^{th}$ photodetector. $d_{ij}$ is the distance between $i^{th}$ transmitter LED and $j^{th}$ photodetector for each UE, $\phi_{ij}$ is the perpendicular angle of $j^{th}$ LED, $\theta_{ij}$ is the angle between $i^{th}$ transmit LED in transmitter-array and $j^{th}$ photodetector in a photodetector array with the receiver axis.
$\Psi$ denotes the FOV for each photodetector. $R(\phi_{ij})$ denotes the Lambertian radiant intensity which can be written as follows:
\begin{gather}
R(\phi_{ij})=\frac{(\kappa+1)\cos^{\kappa}(\phi_{ij})}{2\pi}
\end{gather}
$\kappa$ is the order of Lambertian emission given as follows:
\begin{gather}
\kappa = -\frac{\ln 2}{\ln(\cos(\phi_{\frac{1}{2}}))}
\end{gather}
Typically, as given in \cite{wang2015multiuser,fath2013performance}, these channel-matrices are inherently ill-conditioned and exhibit correlatedness when users are located close to each other.
\begin{center}
\begin{table}[!htbp]
\centering
\caption{Simulation Parameters}\label{tabb}
  \begin{tabular}{ | l |c| }
    \hline
    \scriptsize{Room-Size} & \scriptsize{$5 \text{m}\times 5 \text{m}\times 3 \text{m}$} \\ \hline
    \scriptsize{LED-Height} & \scriptsize{$2.25 \text{m}$} \\ \hline
    \scriptsize{Transmit LED location} & \scriptsize{(0.2,0,-0.75),(-0.2,0,-0.75)} \\ \hline
    \scriptsize{Photodetector-Area} & \scriptsize{$1 \text{cm}^2$} \\ \hline
    \scriptsize{LED array spacing} & \scriptsize{$0.4 \text{m}$}\\ \hline
    \scriptsize{LED Emission Half-Angle $\phi_{\frac{1}{2}}$} & \scriptsize{70 degrees}\\ \hline
    \scriptsize{$\eta$} & \scriptsize{$0.00022$} \\ \hline
    \scriptsize{User 1's PD array coordinates} &\scriptsize{$(0.1,0.1,-3), (0.1,-0.1,-3)$} \\ \hline
    \scriptsize{User 2's PD array coordinates} &\scriptsize{$(-0.1,0.1,-3), (-0.1,-0.1,-3)$} \\ \hline
    \scriptsize{User 3's PD array coordinates}  &\scriptsize{$(-0.35,0.35,-3), (-0.35,-0.35,-3)$} \\ \hline
    \scriptsize{User 4's PD array coordinates}  &\scriptsize{$(0.35,0.35,-3), (0.35,-0.35,-3)$} \\ \hline
    \scriptsize{FOV}, $\Psi$ & \scriptsize{60 degrees} \\ \hline
    \scriptsize{Ceiling Center} & \scriptsize{(0,0,0)} \\ \hline
  \end{tabular}
 \end{table}
\end{center}
Using the given simulation setup, we now describe the simulations that validates the proposed SVD-precoded Chebyshev-NLMS based pre-distorter. 
Using the proposed power allocation scheme derived in (\ref{powalloc}), the sum-rate for the proposed SVD-precoded Chebyshev-NLMS was observed by varying the number of users for 4-QAM and 16-QAM. Additionally, the proposed CR-NOMA based precoding/power allocation technique was also compared with GRPA based power allocation in terms of sum-rate (measured in bpcu (bits per channels use)). From Fig. \ref{fig_1} it is observed that the proposed technique yields a monotonous increase in the sum-rate as derived in the preceding analysis as the number of users is increased (as long as the constant power budget constraint is not violated). On adding a fourth user, the power required to meet the incoming user's QoS is violated, and hence only three users are shown in the plots. For GRPA, it can be observed that the sum rate curve reduces for $U>2$. This is because of two main reasons: a) GRPA does not consider the individual users' QoS into its power allocation technique, and b) GRPA does not perform well if the gain-ratios between individual channels are close to unity (i.e. if  the channels are similar). 

Next, the theoretical expression for average BER vs SNR derived in the previous sections for the proposed CR-NOMA based approach is validated via simulations and compared with GRPA, and the classical precoding technique given in \cite{ali2017non} and \cite{ding2016general}. Various levels of channel estimation error are considered for both 4-QAM and 16-QAM. The simulated BER is compared with theoretical expressions for BER, and also when a linear channel is considered. In Fig. \ref{fig_3}, the average BER for 4-QAM is compared against its corresponding analytically derived upper bound for estimation error of $\sigma_{\gamma}^{2}=0,33\text{dB},36\text{dB}$, thus demonstrating the accuracy of the theoretically derived expression. Further, from Fig. \ref{fig_3}, it can be observed that at high SNR, the average BER for the proposed approach comes close to the ideal linear channel scenario in perfect CSI scenario, which verifies that adaptive pre-distortion technique is compensating the LED nonlinearity at high SNR. Also, the proposed approach outperforms GRPA. In addition, similar simulations have been performed for the $U=3$ for 4-QAM (for estimation error $\sigma_{\gamma}^{2}=0,38\text{dB},40\text{dB}$) in Fig. \ref{fig_4}, and $U=2,3$ for 16-QAM ($\sigma_{\gamma}^{2}=50\text{dB},55\text{dB}$ for $U=2$, and $\sigma_{\gamma}^{2}=60\text{dB},65\text{dB}$ for $U=3$) in Fig. \ref{fig_5} and Fig. \ref{fig_6}. Especially for $U=3$, the BER expression for GRPA is not plotted as the sum-rate curve reduces for $U>2$ as observed in Fig. \ref{fig_1}. It can be further observed that at high SNR the simulated BER curves tend to the upper bound for BER derived in the preceding analysis for $M$-QAM, thus confirming their validity.

Also, it can be noted that at $U=3$ and 16-QAM modulation there is a gap of almost 1dB SNR between the BER performance at $24\text{dB}$ in the presence of nonlinearity with proposed pre-distortion, and the perfect linear channel. Despite using higher order modulation like 16-QAM, and increasing the number of users to $U=3$, it can be observed from Fig. \ref{fig_6} that the residual variance of the Chebyshev pre-distorter (which is directly proportional to noise power), has little effect at high SNR as the average BER performance of the proposed approach comes closer (within 1dB) to that of the ideal linear channel. Further,  small perturbations in the form of estimation errors (like 65dB) causes degradation in the overall average BER characteristics, as we increase the number of users, and use higher order modulation.

Finally, comparison with existing zero-forcing precoding as given in \cite{ali2017non} and \cite{ding2016general} is also presented for both 4-QAM and 16-QAM for $U=2,3$. Better BER performance is observed in all scenarios in case of the proposed SVD-based precoding as compared to the existing precoding techniques given in \cite{ali2017non} and \cite{ding2016general}.

\section{Conclusion}\label{Sec10}
In this work, the CR-inspired MIMO-NOMA technique for VLC has been theoretically investigated in the presence of LED nonlinearity. Initially, challenges for MIMO NOMA-VLC scenarios (like correlated channels, and device nonlinearity) have been reviewed. Next, a novel generalized precoding technique is proposed for IoT applications for arbitrary number of users with differing QoS requirements (but not necessarily different channel conditions). This precoding technique is combined with Chebyshev pre-distortion for applicability in LED nonlinearity impaired channels. A CR-inspired power-allocation algorithm is also proposed for the algorithm derived in this work. Analytical upper bounds on BER is carried out for the proposed precoding, and the derived CR-inspired power allocation technique for square $M$-QAM. Further, simulations have been performed to validate the theoretically derived analytical upper bounds. This work can have applications in VLC as an integral part of IoT and Li-fi devices where the applications of NOMA are more prevalent as we converge towards 5G technologies.

\appendix
\subsection{\textit{Proof of Theorem 1}}
It can be noted from (\ref{rule}) that $\lambda^{(u)}$ is chosen according to the equation:
\begin{gather}
{\lambda^{(u)}}\log(S^{(u)})=\kappa
\end{gather}
where $\kappa$ is an arbitrary positive constant, and $\log(\sum_{g=1}^{c^{(u)}}|\sigma^{(g,(u))}|^2)=S^{(u)}$, $S^{(u)}$ denoting the strength of the channel. Without loss of generality, let, 
\begin{align}
\log(S^{(u)})&=\frac{1}{\log(T^{(u)})}&,\log(S^{(u)})>0 \\ \nonumber
             &=-\frac{1}{\log(T^{(u)})}&,\log(S^{(u)})<0 \\ \nonumber
\end{align}
where $T^{(u)}$ is an auxiliary variable. Hence,
\begin{gather}
{\lambda^{(u)}} = \kappa \log(T^{(u)})
\end{gather}
For two users, user $b$, and user $b^{'}$, it can be written that:
\begin{gather}
{\lambda^{(b)}} = \kappa \log(T^{(b)})\\ \nonumber
{\lambda^{(b^{'})}} = \kappa \log(T^{(b^{'})})
\end{gather}
Then, the increment in $\lambda$, $\Delta \lambda = \lambda^{(b)} - \lambda^{(b^{'})}$ can be written as follows:
\begin{gather}
\Delta \lambda = \kappa \log\Bigg(\frac{T^{(b)}}{T^{(b^{'})}}\Bigg)
\end{gather}
This implies that,
\begin{gather}
T^{(b)} = T^{(b^{'})}\exp(\frac{\Delta\lambda}{\kappa})
\end{gather}
As the sum of powers of all the channels are finite, we can assume that $\sum_{\forall u}T^{(u)} = T$, where $T$ is a finite positive real number. Then the following can be noted:
\begin{gather}\label{gibbs}
\upsilon(\Delta\lambda)=\frac{T^{(b)}}{T} = \frac{\exp(-\frac{\Delta\lambda^{(b)})}{\kappa}}{\sum_{u}\exp(-\frac{\Delta\lambda^{(b)}}{\kappa})}
\end{gather}
where $\upsilon(\Delta\lambda)$ is a probability measure on $\Delta\lambda$.
As given in \cite{ding2016impact}, the rate-gap between two users, $b$ and $b^{'}$, for CR inspired NOMA, can be approximated at high SNR by the following equation:
\begin{gather}\label{gap}
R^{(b)}-R^{(b^{'})}\approx \log(|\bm{\Sigma}^{(b)^{2\lambda^{(b)}}}|)-\log(|\bm{\Sigma}^{(b^{'})^{2\lambda^{(b^{'})}}}|)\\ \nonumber
=\sum_{\forall g} \Bigg(\frac{\sigma^{(g,{(b)})}}{\sigma^{(g,{(b^{'})})}}\Bigg)^{2\lambda^{(b)}}+\sum_{\forall g}[\sigma^{(g,(b^{'}))}]^{2\Delta\lambda^{(b)}}
\end{gather}
From (\ref{gibbs}), we can write the probability density function (p.d.f) $\upsilon(\Delta\lambda)$ as follows:
\begin{gather}
\upsilon(\Delta\lambda)=\frac{1}{Z}\exp(-\frac{\Delta\lambda}{\kappa})
\end{gather}
where $Z$ is the partition function in the denominator which is a normalizing constant. It can be seen that the p.d.f follows a Gibbs distribution.
Then, the following can be inferred:
\begin{gather}
\log(\upsilon(\Delta\lambda)) = -\frac{\Delta\lambda}{\kappa} -\log Z
\end{gather}
Taking expectation on both sides with respect to $\upsilon(\Delta\lambda)$,
\begin{gather}\label{prove}
\kappa(H - \log Z) = \mathbb{E}[\Delta\lambda]
\end{gather}
where $H = -\mathbb{E}[\log(\upsilon(\Delta\lambda))]$ denotes the entropy with respect to $\upsilon(\Delta\lambda)$, which achieves maxima by sampling $\Delta\lambda$ with respect to the Gibbs distribution (which belongs to the class of maximum entropy distributions \cite{kindermann1980markov}) as in (\ref{gibbs}).

From (\ref{gap}), the gap in sum-rate with the expectation taken with respect to $\upsilon(\Delta\lambda)$ can be written as follows:
\begin{gather}
\mathbb{E}_{\upsilon(\Delta\lambda)}[R^{(b)}-R^{(b^{'})}] = \sum_{\forall g} \Bigg(\frac{\sigma^{{(g,(b))}}}{\sigma^{(g,{(b^{'})})}}\Bigg)^{2\lambda^{(b)}}+\mathbb{E}_{\upsilon(\Delta\lambda)}[\sum_{\forall g}[\sigma^{(g,(b^{'}))}]^{2\Delta\lambda^{(b)}}]
\end{gather} 
By Jensen's inequality:
\begin{gather}
\mathbb{E}_{\upsilon(\Delta\lambda)}[R^{(b)}-R^{(b^{'})}] \geq \sum_{\forall g} \Bigg(\frac{\sigma^{{(g,(b))}}}{\sigma^{{(g,(b^{'}))}}}\Bigg)^{2\lambda^{(b)}}+[\sum_{\forall g}[\sigma^{(g,(b^{'}))}]^{\mathbb{E}_{\upsilon(\Delta\lambda)}[2\Delta\lambda^{(b)}]}]
\end{gather}
Let $K = \sum_{\forall g} \Big(\frac{\sigma^{{(g,(b))}}}{\sigma^{{(g,(b^{'}))}}}\Big)^{2\lambda^{(b)}}$. From (\ref{prove}), it can be seen that:
\begin{gather}\label{final}
\mathbb{E}_{\upsilon(\Delta\lambda)}[R^{(b)}-R^{(b^{'})}]\geq K + [\sum_{\forall g}[\sigma^{(g,(b^{'}))}]^{2\kappa(H-\log Z)}]
\end{gather}
As the Gibbs distribution is a maximum entropy distribution which maximizes $H$, the rate gap between the users is maximized as can be inferred from (\ref{final}).

\ifCLASSOPTIONcaptionsoff
  \newpage
\fi
\bibliographystyle{IEEEtran}
\bibliography{IEEEabrv,paper}
\newpage
\begin{figure}[!htbp]
  \centering
  \includegraphics[width=1\linewidth,height=12cm]{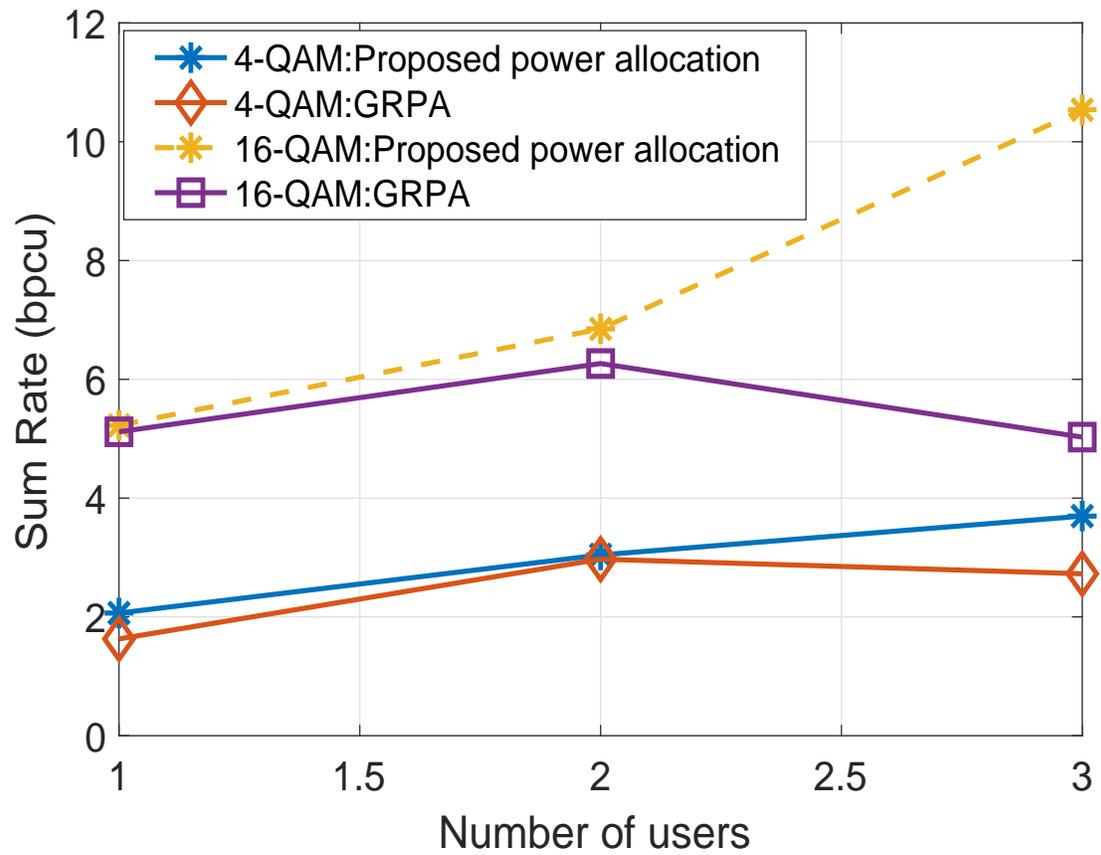}
  \caption{Sum-Rate vs number of users compared with GRPA for 4-QAM and 16-QAM.}\label{fig_1}
\end{figure}
\begin{figure}[!htbp]
  \centering
  \includegraphics[width=1\linewidth,height=10cm]{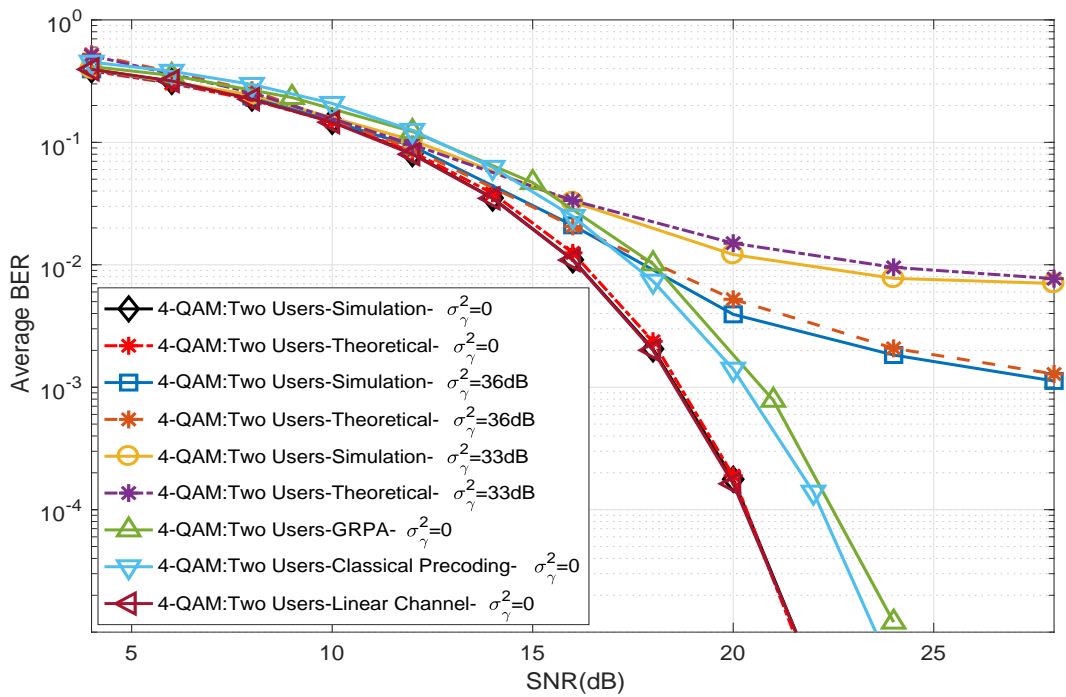}
  \caption{Plot of average BER vs SNR for the proposed approach, GRPA, and classical precoding  for $U=2$ for the proposed technique for various estimation error-levels for 4-QAM modulation.}\label{fig_3}
\end{figure}
\begin{figure}[!htbp]
  \centering
  \includegraphics[width=1\linewidth,height=12cm]{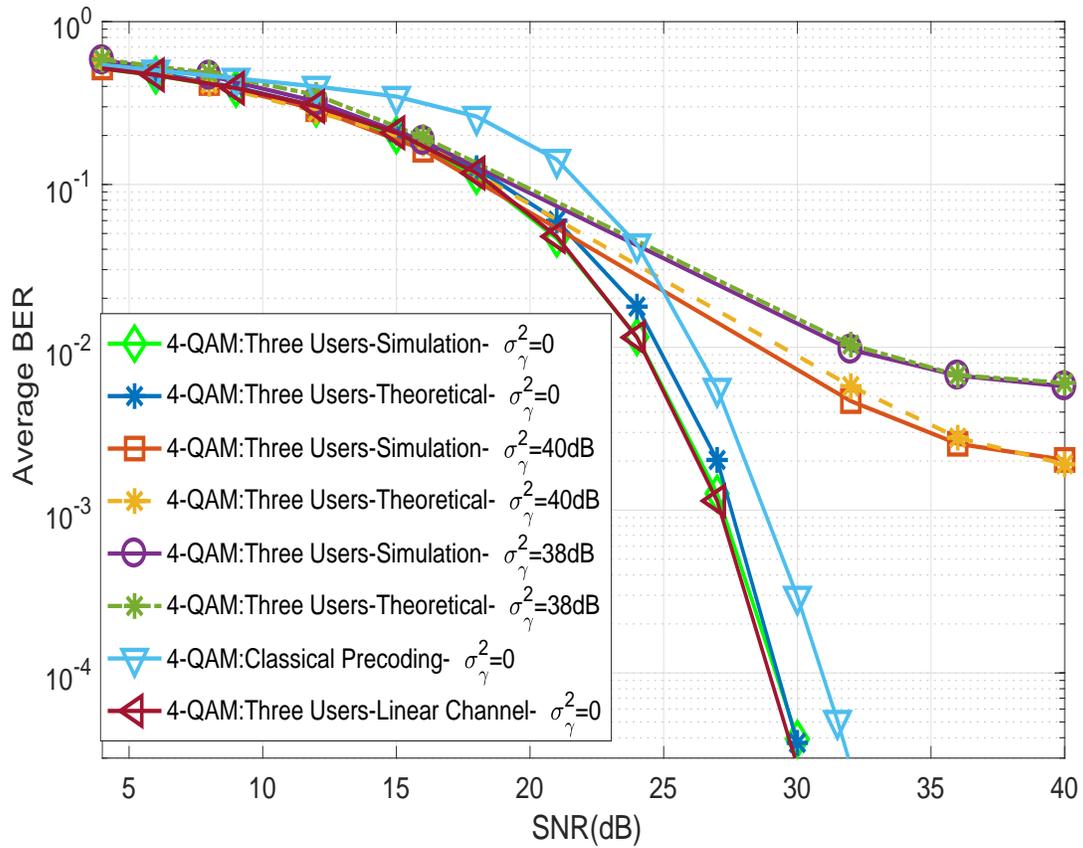}
  \caption{Plot of average BER vs SNR for the proposed approach, and classical precoding for $U=3$ for the proposed technique for various estimation error variances for 4-QAM.}\label{fig_4}
\end{figure}
\begin{figure}[!htbp]
  \centering
  \includegraphics[width=1\linewidth,height=11cm]{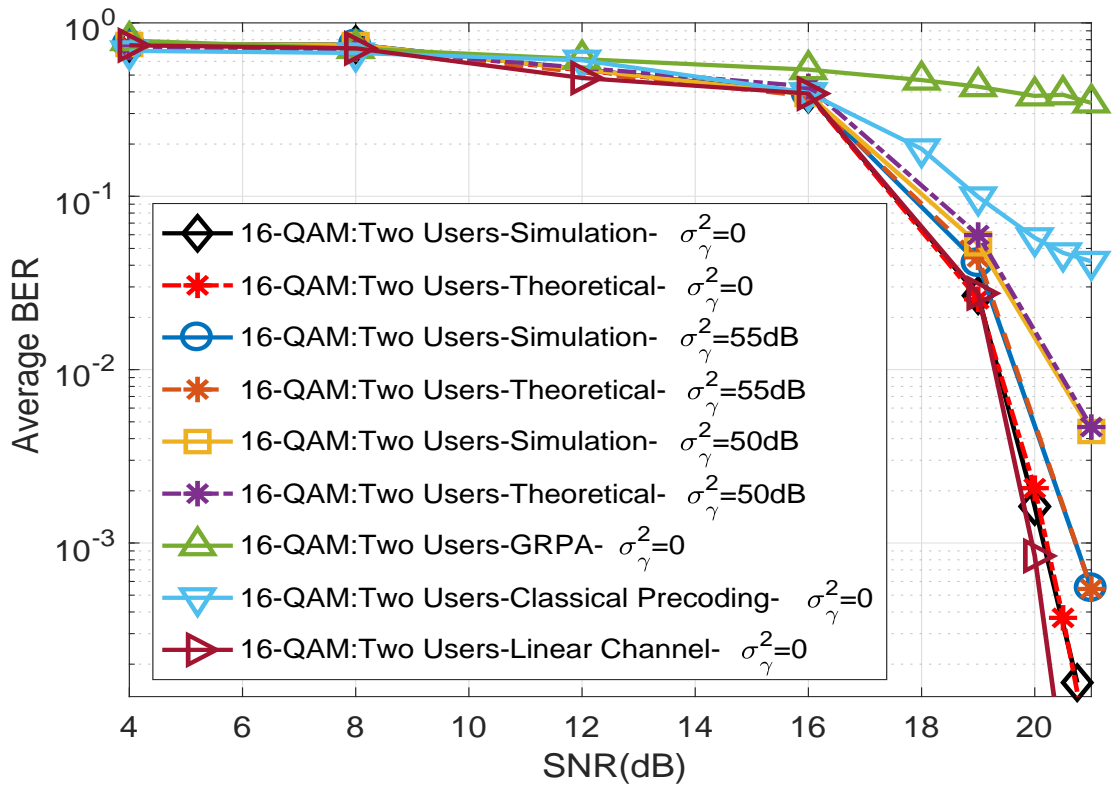}
  \caption{Plot of average BER vs SNR for the proposed approach, GRPA, and classical precoding for $U=2$ for various estimation error levels for 16-QAM.}\label{fig_5}
\end{figure}
\begin{figure}[!htbp]
  \centering
  \includegraphics[width=1\linewidth,height=11cm]{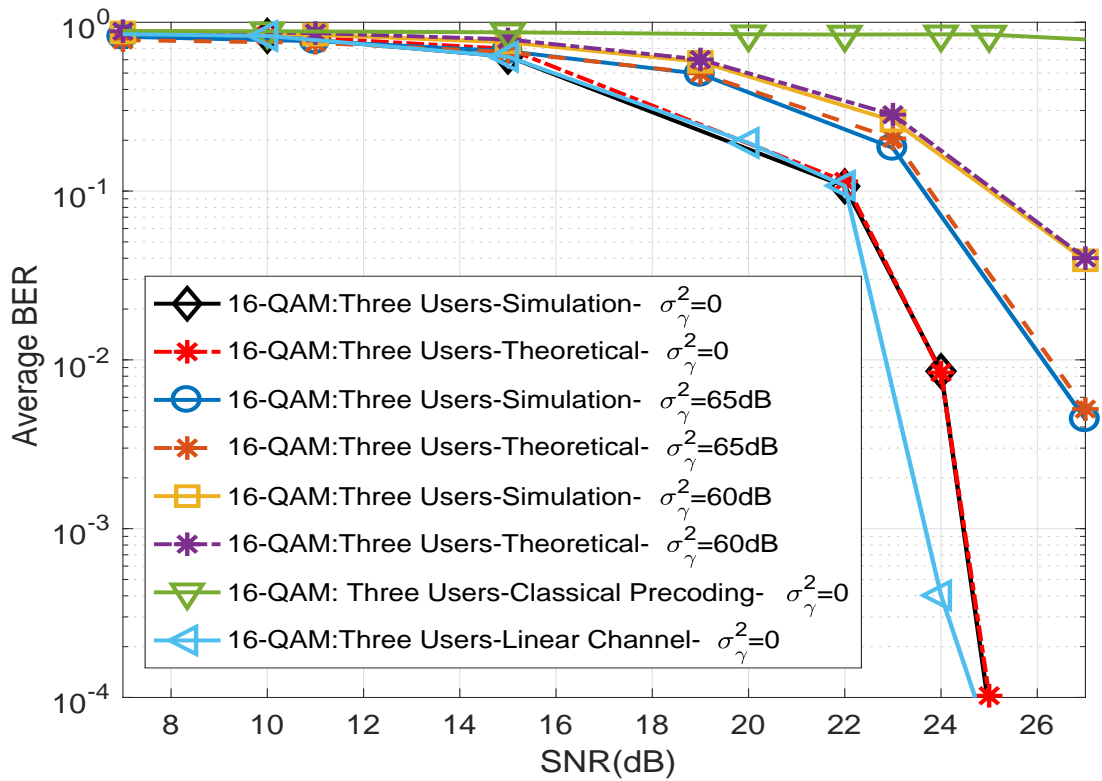}
  \caption{Plot of average BER plotted as a function of SNR for $U=3$ for the proposed approach, and the classical precoding  for various estimation error levels for 16-QAM.}\label{fig_6}
\end{figure}

\end{document}